\newtheorem{proposition}{Proposition}
\newtheorem{defn}{Definition}
\newtheorem{remark}{Remark}
\newtheorem{example}{Example}
\newcommand{\abs}[1]{\left\lvert{#1}\right\rvert}
\newcommand{\pmat}[1]{\begin{pmatrix}#1\end{pmatrix}}
\newcommand{\R}{\mathbb{R}}
\newcommand{\N}{\mathbb{N}}
\renewcommand{\P}{\mathcal{P}}
\newcommand{\A}{\mathcal{A}}
\newcommand{\M}{\mathcal{M}}
\newcommand{\D}{\mathcal{D}}
\newcommand{\row}{\mathrm{row}}
\title{Learning switched systems from simulation experiments}
\author{Atreyee Kundu}
\thanks{The author is with the Department of Electrical Engineering, Indian Institute of Science Bangalore, Bengaluru - 560012, Karnataka, India, E-mail: \texttt{atreyeek@iisc.ac.in}}
\thanks{The author thanks Pavithra Prabhakar for helpful discussions and an introduction to the \(L^*\)-algorithm.}
\keywords{switched systems, simulation, active learning}
\date{\today}
\begin{document}

	\begin{abstract}
		The design of decision and control strategies for switched systems typically requires complete knowledge of (i) mathematical models of the subsystems and (ii) restrictions on admissible switches between the subsystems. We propose an active learning algorithm that infers (i) and (ii) for discrete-time switched systems whose subsystems dynamics are governed by sets of scalar polynomials and switching signals are constrained by automata. We collect data from gray-box simulation models of the switched systems for this purpose. Our technique for learning (i) involves linear algebraic tools, while for learning (ii) we employ a modified version of the well-known \(L^*\)-algorithm from machine learning literature. A numerical example is presented to demonstrate our learning algorithm.
	\end{abstract}

    \maketitle

	%\centerline{\textcolor{cyan}{The important changes in the main body of the manuscript are in blue.}}
	%\centerline{\textcolor{red}{Please rephrase the parts in blue in the abstract so that they read better.}}

	%%%%%%%%%%%%%%%%%%%%%%%%%%%%%%%%%%%%%%%%%%%%%%%%%%%%%%%%%%%%%%%%%%%%%%%%%%%%

%=============================================================================
%=============================================================================
\section{Introduction}
\label{s:intro}
     A switched system contains several subsystems and a switching signal that governs, at every instant of time, which subsystem is active at that instant \cite[\S 1.1.2]{Liberzon}. Such systems find wide applications in power systems and power electronics, aircraft and air traffic control, network and congestion control, etc. \cite[p.\ 5]{Sun}. The knowledge of mathematical models of the subsystems (e.g., transfer functions, state-space models, or kernel representations) and restrictions on the set of admissible switching signals (e.g., admissible switches between the subsystems, admissible dwell times on the subsystems, etc.) is a key requirement for the design of decision and control algorithms for switched systems. As a result, system identification techniques for these systems constitute a key topic in the literature, see e.g., the survey paper \cite{Garulli2012}, the recent works \cite{Petreczky2018,Gosea2018,Dabbene2019} and the references therein. In general, the problem of identification of switched systems is known to be NP-hard \cite{Lauer2016} and the most widely used techniques to address it include (a) algebraic methods \cite{Vidal2003}, (b) mixed integer programming methods \cite{Roll2004}, (c) clustering based methods \cite{Ferrari-Trecate2003}, (d) Bayesian methods \cite{Juloski2005}, (e) bounded error identification methods \cite{Bemporad2005}, and (f) sparse optimization methods \cite{Bako2011}. In this paper we explore a new tool, viz., active learning techniques for the identification of discrete-time switched systems.

     In modern industrial setups, simulation models are often provided by the system manufacturers to study the system behaviour with respect to various sets of inputs prior to their application to the actual system.  However, the mathematical models of the subsystems and the constraints on their operations underlying this simulation model are often not explicitly known to the user. We address the problem of \emph{learning} a switched system by collecting data from such a simulation model. The said learning task involves inferring two ingredients: (i) mathematical models of the subsystems and (ii) restrictions on the set of admissible switching signals.

     We consider discrete-time switched systems whose subsystems dynamics are governed by sets of scalar polynomials and switching signals are constrained by automata (henceforth to be called as \emph{restriction automata}).\footnote{The reader is referred to \cite{Hopcroft} for details on automata theory.} A restriction automaton is a strongly connected, directed and labelled multigraph with a unique initial node.\footnote{Recall that a directed multigraph is a directed graph that can have both self-loops and parallel edges \cite[p.\ 7]{Bollobas}.} The edges of a restriction automaton are labelled with indices of the subsystems. A switching signal admissible on a restriction automaton is an infinite sequence of subsystems indices such that the automaton admits an infinite path whose edge labels (in order) match the sequence. The language of the restriction automaton is a finite set of finite sequences of subsystems indices that captures properties of the admissible switching signals. Given the number of subsystems, the dimension of the subsystems, the order of the set of scalar polynomials governing the subsystems dynamics (henceforth to be called as \emph{component polynomials}), the maximum length of the elements in the language of the restriction automaton, and a simulation model of the switched system from which certain information about (a) state trajectories of the individual subsystems and (b) existence of finite sequences of subsystems indices in the language of the restriction automaton can be obtained, we present an algorithm that learns two components: (i') the state-space models of the subsystems and (ii') a minimal automaton that accepts the language of the restriction automaton.\footnote{By ``minimal'', we mean the automaton with the minimum number of nodes that accepts the language under consideration.} Our simulation model of a switched system is a gray-box because partial information about the switched system, as described above, is available.

     Towards inferring (i'), we rely on linear algebraic techniques. The Learner carries out the following tasks: (a) She performs simulation experiments to collect finite traces of state trajectories for the individual subsystems with respect to sets of initial values of the trajectories that satisfy certain properties; (b) She determines the coefficients of the component polynomials by solving sets of systems of linear equations. Our learning process for (ii') is a modified version of the \(L^*\)-algorithm \cite{Angluin1987} from machine learning literature. The following steps are executed: (a) The Learner determines if a finite sequence of subsystems indices is an element of the language of the restriction automaton or not by performing simulation experiments; (b) She conjectures an automaton when the above information is collected for a set of finite sequences of subsystems indices that satisfies certain conditions; (c) Correctness of the conjectured automaton is determined by matching its language with the language of the restriction automaton. If a sequence that is misclassified by the conjectured automaton (henceforth to be called as a \emph{counter-example}) is obtained, then the Learner updates the conjectured automaton by collecting more information from the simulation model, and repeats the check for the correctness of the conjectured automaton;\footnote{A finite sequence of subsystems indices is misclassified by the conjectured automaton, if the sequence is an element of the language of the conjectured automaton but not an element of the language of the restriction automaton, or vice-versa.} (d) The learning process terminates when a counter-example no longer exists. The algorithm proposed in this paper falls in the domain of \emph{active learning} because the Learner has access to a simulation model that aids the learning process. The running time of our learning algorithm is polynomially bounded. We provide an example to demonstrate our learning algorithm.

     The remainder of this paper is organized as follows: in \S\ref{s:prob_stat} we describe the problem under consideration. Our learning algorithm appears in \S\ref{s:mainres}. We conclude in \S\ref{s:concln} with a brief discussion on future research directions.

     {\bf Notation}. \(\R\) is the set of real numbers and \(\N\) is the set of natural numbers, \(\N_{0} = \N\cup\{0\}\). For a finite set \(S\), we denote by \(\abs{S}\) the size of \(S\).
%=========================================================
\section{Problem Statement}
\label{s:prob_stat}
    We consider a set of systems
    \begin{align}
    \label{e:family}
        x(t+1) = f_{p}(x(t)),\:\:x(0) = x_{0},\:\:p\in\P,
    \end{align}
    where \(x(t)\in\R^{d}\) is the vector of states at time \(t\), \(\P = \{1,2,\ldots,N\}\) is an index set, and the functions \(f_{p}:\R^{d}\to\R^{d}\) take the form
   % \begin{align*}
        \(f_{p}(\xi) = \pmat{\displaystyle{\sum_{k=0}^{m}a_{p,1k}\xi_{1}^{k}}&
        \displaystyle{\sum_{k=0}^{m}a_{p,2k}\xi_{2}^{k}}&
        \cdots &
        \displaystyle{\sum_{k=0}^{m}a_{p,dk}\xi_{d}^{k}}}^\top\), \(p\in\P\),
   % \end{align*}
    where \(m\in\N\), \(a_{p,ik}\in\R\), \(i=1,2,\ldots,d\), \(k=0,1,\ldots,m\), and \(\xi_{i}\) is the \(i\)-th element of the vector \(\xi\), \(i=1,2,\ldots,d\). Let \(\sigma:\N_{0}\to\P\) be a switching signal. A switched system generated by the subsystems \(p\in\P\) and a switching signal \(\sigma\) is given by
    \begin{align}
    \label{e:swsys}
        x(t+1) = f_{\sigma(t)}(x(t)),\:\:x(0) = x_{0},\:\:t\in\N_{0}.
    \end{align}
    \begin{remark}
    \label{rem:linear}
    \rm{
        Notice that if \(m=1\) and \(a_{p,i0} = 0\) for all \(i=1,2,\ldots,d\) and \(p\in\P\), then \eqref{e:swsys} is a switched linear system whose subsystems are \(f_{p}(\xi) = A_{p}\xi\), \(p\in\P\) with \(A_{p}\in\R^{d\times d}\), \(p\in\P\) diagonal matrices.
    }
    \end{remark}

    Let the set of admissible switching signals of \eqref{e:swsys} be restricted by an automaton.
    \begin{defn}
    \label{d:automat}
    \rm{
        A \emph{restriction automaton} \(\A = (V,v_{0},E,\P,\ell)\) for a switched system \eqref{e:swsys} is a strongly connected, directed and labelled multigraph, where \(V\) is a finite set of nodes, \(v_{0}\in V\) is the unique initial node, \(E\) is a finite set of edges, \(\P\) is the index set described above, and \(\ell:E\to\P\) is an edge labelling function.
    }
    \end{defn}
    \begin{defn}
    \label{d:path}
    \rm{
        A \emph{path} on \(\A\) is a sequence of nodes starting from the initial node, \(W = \overline{v}_{0},\overline{v}_{1},\ldots,\overline{v}_{n}\), where \(\overline{v}_{0} = v_{0}\), \(\overline{v}_{k}\in V\) such that \((\overline{v}_{k-1},\overline{v}_{k})\in E\), \(k=1,2,\ldots,n\).
    }
    \end{defn}
    Notice that since \(\A\) is a multigraph, the choice of the edge \((\overline{v}_{k-1},\overline{v}_{k})\in E\), \(k\in\{1,2,\ldots,n\}\) is not necessarily unique. The \emph{length of a path} is one less than the number of nodes that appear in the sequence, counting repetitions. For example, the length of the path \(W\) in Definition \ref{d:path} is \(n\). By the term \emph{infinite path}, we mean a path of infinite length, i.e., it has infinitely many nodes.
    \begin{defn}
    \label{d:adm_sw}
    \rm{
        A switching signal \(\sigma = \sigma(0)\sigma(1)\sigma(2),\sigma(3)\)\(\cdots\) is \emph{admissible on \(\A\)}, if \(\A\) admits an infinite path \(W = \overline{v}_{0},\overline{v}_{1},\overline{v}_{2},\ldots\) such that \(\ell(\overline{v}_{k},\overline{v}_{k+1})=\sigma(k)\), \(k\in\N_{0}\).
    }
    \end{defn}
    Let \(S_{\A}\) denote the set of all switching signals admissible on \(\A\). By strong connectivity of \(\A\), the set \(S_{\A}\) is non-empty. Fix \(\sigma\in S_{\A}\). We let \(\sigma|_{\tau_{1}}^{\tau_{2}}\) denote the segment of \(\sigma\), \(\sigma|_{\tau_{1}}^{\tau_{2}} = \sigma(\tau_{1})\sigma(\tau_{1}+1)\cdots\sigma(\tau_{2})\), of length \(\tau_{2}-\tau_{1}+1\), \(\tau_{1},\tau_{2}\in\N_{0}\).
    \begin{defn}
    \label{d:lang}
    \rm{
        The \emph{language of \(\A\)} is given by
        \begin{align}
        \label{e:lang}
            L_{\A} = \{\sigma|_{0}^{\tau},\:\tau=0,1,\ldots,M,\:\text{for all}\:\sigma\in S_{\A}\}\cup\{\lambda\},
        \end{align}
        where \(M\in\N\) is a large number such that for all \(\sigma\in S_{\A}\), \(\sigma|_{\tau'_{1}}^{\tau'_{2}}\in L_{\A}\), \(\tau'_{1}\geq M\), \(\tau'_{2}-\tau'_{1}\leq M+1\), and \(\lambda\) denotes the empty sequence, i.e., a sequence of length \(0\).
    }
    \end{defn}
    Let \(\tilde{p}_{n}\) denote a sequence of subsystems indices of length \(n\), i.e., \(\tilde{p}_{n} = \tilde{p}_{n}(1)\tilde{p}_{n}(2)\cdots\tilde{p}_{n}(n)\) with \(\tilde{p}_{n}(k)\in\P\), \(k=1,2,\ldots,n\). We say that \(\tilde{p}_{n}\in L_{\A}\), if there exists \(\sigma\in S_{\A}\) such that \(\sigma|_{0}^{n-1}\) satisfies \(\sigma(k) = \tilde{p}_{n}(k+1)\), \(k=0,1,\ldots,n-1\).
    Let
    \(
        {\P}^*\) be the set of all sequences of subsystems indices, \(\tilde{p}_{n}\), of length \(n=0,1,\ldots,M\).

    Suppose that the Learner has the following set of information: a) the number of subsystems, \(N\), b) the dimension of the subsystems, \(d\), c) the order of the component polynomials, \(m\), and d) the maximum length, \(M\), of the elements of \(L_{\A}\). In addition, she has access to a gray-box simulation model, \(\M\), of the switched system \eqref{e:swsys}, containing two components:
    \begin{enumerate}[label = (\arabic*), leftmargin = *]
        \item \(\M_{\P}:\P\times\R^{d}\to\R^{d}\), which takes a pair \((p,\overline{x})\in\P\times\R^{d}\), as input and provides an output
        \(
            \M_{\P}(p,\overline{x}) = f_{p}(\overline{x})\in\R^{d},
        \)
        and
        \item \(\M_{\A}:\P^{*}\to\{0,1\}\), which takes a sequence of indices \(\tilde{p}_{n}\in\P^*\) as input and provides an output
            \begin{align*}
                \M_{\A}(\tilde{p}_{n}) =
                \begin{cases}
                    1,\:\:&\text{if}\:\tilde{p}_{n}\in L_{\A},\\
                    0,\:\:&\text{if}\:\tilde{p}_{n}\notin L_{\A}.
                \end{cases}
            \end{align*}
    \end{enumerate}
    We assume that the model, \(\M\), is accurate, and always provides correct information. The Learner's objective is to infer (i'') the coefficients \(a_{p,ik}\), \(i=1,2,\ldots,d\), \(k=0,1,\ldots,m\), \(p\in\P\) of the component polynomials and (ii'') a minimal automaton that accepts the language \(L_{\A}\) from the knowledge of \(N\), \(d\), \(m\), \(M\), and the information obtained from the simulation model, \(\M\).

    \begin{remark}
    \label{rem:sys-id_diff}
    \rm{
        In general, the problem of system identification for switched systems deals with inferring mathematical models of the subsystems and the discrete switching signal from sets of inputs (possibly controlled) - output (possibly noisy) data available to the user. In this paper we consider open-loop operations of a switched system \eqref{e:swsys} in the sense that there is no exogenous continuous control input applied and only a discrete switching signal controls the mode of operation of the system. We operate in the setting where initial values of the state trajectories for each subsystem can be chosen externally and the corresponding one-step evolution of these trajectories can be observed. The class of restriction automata under consideration expresses a large set of switching signals, viz., with no restrictions, with restrictions on admissible switches between  the subsystems and/or restrictions on admissible dwell times on the subsystems, etc.
    }
    \end{remark}

    We now describe our learning techniques.
%=========================================================
\section{Results}
\label{s:mainres}
%=====================================
\subsection{Learning state-space models of subsystems from \(\M_{\P}\)}
\label{ss:subsys_learning}
    We begin with learning (i''). Algorithm \ref{algo:model_learn} computes the coefficients \(a_{p,ik}\), \(k=0,1,\ldots,m\), \(i=1,2,\ldots,d\) of the polynomials \(\displaystyle{\sum_{k=0}^{m}a_{p,ik}}\), \(i=1,2,\ldots,d\) by employing the following steps:
    %\begin{itemize}[label = \(\circ\), leftmargin = *]
    First, \(\overline{x}_{0}\in\R^{d}\) is fixed to be a vector containing \(d\)-many zeros. The pair \((p,\overline{x}_{0})\) is input to \(\M_{\P}\) and the vector \(x'_{0} = \pmat{a_{p,10} & a_{p,20} & \cdots & a_{p,d0}}^\top\) is obtained. Second, for each \(k=1,2,\ldots,m\), \(\overline{x}_{k}\in\R^{d}\) is fixed to be a vector containing \(d\)-many \(k\)'s, the pair \((p,\overline{x}_{k})\) is input to \(\M_{\P}\) and a vector \(
                x'_{k} = \pmat{a_{p,10}+a_{p,11}\cdot k^{1} + \cdots + a_{p,1m}\cdot k^{m}\\
                               \vdots\\
                               a_{p,d0}+a_{p,d1}\cdot k^{1} + \cdots + a_{p,dm}\cdot k^{m}}
            \)
            is obtained.\footnote{For a vector \(x'_{k}\in\R^{d}\), \(x'_{k,i}\) denotes its \(i\)-th component, \(i=1,2,\ldots,d\), \(k=0,1,\ldots,m\).} Third, for each \(i=1,2,\ldots,d\), the system of linear equations \eqref{e:sys_lin} is solved, and the values of \(a_{p,ik}\), \(k=1,2,\ldots,m\) are obtained.
    \begin{algorithm}[htbp]
			\caption{Learning state-space models of the subsystems} \label{algo:model_learn}
		\begin{algorithmic}[1]
			\renewcommand{\algorithmicrequire}{\textbf{Input:}}
			\renewcommand{\algorithmicensure}{\textbf{Output:}}
			
			\REQUIRE The total number of subsystems, \(N\), the dimension of the subsystems, \(d\), the order of the component polynomials, \(m\), and a simulation model, \(\M_{\P}\).
            \ENSURE The functions \(f_{p}\), \(p\in\P\).

            \FOR {\(p=1,2,\ldots,N\)}
                \STATE Set \(\overline{x}_{0} = \pmat{0 & 0 &\cdots & 0}^\top\).
                \STATE Input \((p,\overline{x}_{0})\) to \(\M_{\P}\) and obtain \(x'_{0} = \M_{\P}(p,\overline{x})\).
                \FOR {\(i=1,2,\ldots,d\)}
                    \STATE Set \(a_{p,i0} = x'_{0,i}\).
                \ENDFOR
                \FOR {\(k=1,2,\ldots,m\)}
                    \STATE Set \(\overline{x}_{k} = \pmat{k & k & \cdots & k}^\top\in\R^{d}\).
                    \STATE Input \((p,\overline{x}_{k})\) to \(\M_{\P}\) and obtain \(x'_{k} = \M_{\P}(p,\overline{x}_{k})\).
                    \FOR {\(i=1,2,\ldots,d\)}
                        \STATE Set \(a_{p,i1}\cdot k^{1}+a_{p,i2}\cdot k^{2} + \cdots a_{p,im}\cdot k^{m} = x'_{k,i}-a_{p,i0}\).
                    \ENDFOR
                \ENDFOR
                \FOR {\(i=1,2,\ldots,d\)}
                    \STATE Solve the following system of linear equations for \(a_{p,ik}\in\R\), \(k=1,2,\ldots,m\):
                    \begin{align}
                    \label{e:sys_lin}
                        \hspace*{-0.8cm}a_{p,i1}\cdot 1^{1} + a_{p,i2}\cdot 1^{2} + \cdots + a_{p,im}\cdot 1^{m} &= x'_{1,i} - a_{p,i0}\nonumber\\
                        \hspace*{-0.8cm}a_{p,i1}\cdot 2^{1} + a_{p,i2}\cdot 2^{2} + \cdots + a_{p,im}\cdot 2^{m} &= x'_{2,i} - a_{p,i0}\nonumber\\
                        \vdots\\
                        \hspace*{-0.8cm}a_{p,i1}\cdot m^{1} + a_{p,i2}\cdot m^{2} + \cdots + a_{p,im}\cdot m^{m} &= x'_{m,i} - a_{p,i0}.\nonumber
                    \end{align}
                \ENDFOR
                \STATE Output \(f_{p}(x) = \pmat{\displaystyle{\sum_{k=0}^{m}a_{p,1k}x_{1}^{k}}&
        \cdots &
        \displaystyle{\sum_{k=0}^{m}a_{p,dk}x_{d}^{k}}}^\top\).
            \ENDFOR
		\end{algorithmic}
	\end{algorithm}
    \begin{proposition}
    \label{prop:unique_soln}
        Fix \(p\in\P\) and \(i\in\{1,2,\ldots,d\}\). The system of linear equations \eqref{e:sys_lin} has a unique solution.
    \end{proposition}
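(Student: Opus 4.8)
The plan is to view \eqref{e:sys_lin} as a square linear system in the \(m\) unknowns \(a_{p,i1},a_{p,i2},\ldots,a_{p,im}\) and to show that its coefficient matrix is invertible; existence and uniqueness of the solution then follow at once. Reading off the coefficients, the \(j\)-th equation is \(\sum_{k=1}^{m} a_{p,ik}\,j^{k} = x'_{j,i} - a_{p,i0}\), so the coefficient matrix \(C\in\R^{m\times m}\) has \((j,k)\)-entry equal to \(j^{k}\) for \(j,k\in\{1,2,\ldots,m\}\). It therefore suffices to prove that \(\det C \neq 0\).

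To evaluate \(\det C\), I would first pull the common factor \(j\) out of the \(j\)-th row, noting that this row equals \((j, j^{2},\ldots,j^{m}) = j\,(1, j,\ldots,j^{m-1})\). Extracting these scalars gives \(\det C = \bigl(\prod_{j=1}^{m} j\bigr)\det V = m!\,\det V\), where \(V\) has \((j,k)\)-entry \(j^{k-1}\). The matrix \(V\) is precisely the Vandermonde matrix on the distinct nodes \(1,2,\ldots,m\), whose determinant is the nonzero product \(\prod_{1\le j<k\le m}(k-j)\). Hence \(\det C = m!\,\prod_{1\le j<k\le m}(k-j)\neq 0\), so \(C\) is invertible and \eqref{e:sys_lin} admits a unique solution.

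The only real obstacle is recognizing the Vandermonde structure after the row scaling; once identified, the classical nonvanishing of the Vandermonde determinant on distinct nodes closes the argument. Alternatively, one can avoid determinants via a polynomial-interpolation argument: if the homogeneous system \(Cz=0\) had a nontrivial solution \(z=(z_{1},\ldots,z_{m})\), then \(q(t)=\sum_{k=1}^{m} z_{k}t^{k}\) would be a nonzero polynomial of degree at most \(m\) satisfying \(q(0)=0\) and \(q(j)=0\) for \(j=1,2,\ldots,m\), i.e.\ having the \(m+1\) distinct roots \(0,1,\ldots,m\) — a contradiction. Either route shows that the kernel of \(C\) is trivial, which is exactly the desired uniqueness.
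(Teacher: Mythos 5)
Your proposal is correct and follows essentially the same route as the paper: both reduce the claim to the non-singularity of the coefficient matrix with \((j,k)\)-entry \(j^{k}\), which holds because the nodes \(1,2,\ldots,m\) are distinct. The only difference is that the paper cites this non-singularity as a known fact (from Bernstein's \emph{Matrix Mathematics}), whereas you supply its proof by factoring each row to expose the Vandermonde determinant (and also give a valid interpolation-based alternative).
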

    \begin{proof}
        Let us rewrite \eqref{e:sys_lin} as \(AX = b\), where
        \(A = \pmat{1^1 & 1^2 & \cdots & 1^m\\2^1 & 2^2 & \cdots & 2^m\\\vdots & & \cdots & \vdots\\m^1 & m^2 & \cdots & m^m}\in\R^{m\times m}\), \(X = \pmat{a_{p,i1}\\a_{p,i2}\\\vdots\\a_{p,im}}\in\R^{m}\) and \(b = \pmat{x'_{1,i}-a_{p,i0}\\x'_{2,i}-a_{p,i0}\\\vdots\\x'_{m,i}-a_{p,i0}}\in\R^{m}\). Notice that \(A\) is non-singular because \(1,2,\ldots,m\) are distinct \cite[Fact 5.16.3]{Bernstein}. It follows at once that \eqref{e:sys_lin} admits a unique solution.
    \end{proof}

    \begin{proposition}
    \label{prop:bounded_time}
        Algorithm \ref{algo:model_learn} terminates in bounded time.
    \end{proposition}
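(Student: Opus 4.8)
The plan is to exploit the fact that Algorithm \ref{algo:model_learn} is assembled entirely from \texttt{for}-loops with finite, a priori fixed bounds and contains neither \texttt{while}-loops nor recursion; termination then follows simply by bounding the total number of elementary operations. First I would observe that the outer loop executes exactly \(N\) times, and that within each pass the remaining structure consists of: one query to \(\M_{\P}\) to obtain \(x'_{0}\); a loop of length \(d\) that assigns the constant coefficients \(a_{p,i0}\); a loop of length \(m\), each iteration of which performs one query to \(\M_{\P}\) followed by an inner loop of length \(d\); and finally a loop of length \(d\), each iteration of which solves one instance of the system \eqref{e:sys_lin}. Counting these, the algorithm issues \(N(m+1)\) queries to \(\M_{\P}\), performs \(O\big(N(dm+d)\big)\) scalar assignments, and solves \(Nd\) linear systems of size \(m\times m\).

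Next I would argue that each of these elementary operations completes in bounded time. The queries to \(\M_{\P}\) return in bounded time because, by the standing assumption, \(\M\) is an accurate simulation model that always supplies the value \(f_{p}(\overline{x})\) on demand; the scalar assignments are trivially bounded. For the linear solves, Proposition \ref{prop:unique_soln} guarantees that each system \eqref{e:sys_lin} is nonsingular and hence admits a unique solution, which can be computed—for instance by Gaussian elimination—in \(O(m^{3})\) arithmetic operations, a quantity independent of the loop indices \(p\) and \(i\).

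Finally I would combine these bounds: the total work is at most a fixed polynomial in the inputs \(N\), \(d\), and \(m\), all of which are finite and known to the Learner. Since no step of the procedure can loop indefinitely, the algorithm halts after a number of operations bounded by this polynomial, which establishes the claim.

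The main obstacle, such as it is, lies not in the combinatorial counting—which becomes routine once one notes the absence of any unbounded loop or recursive call—but in justifying that a single query to \(\M_{\P}\) itself terminates; this is resolved by appealing to the assumption that \(\M\) is accurate and responds to every admissible input, so I would flag that assumption \emph{explicitly} rather than silently treat the query as a cost-free oracle call.
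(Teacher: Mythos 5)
Your proposal is correct and follows essentially the same route as the paper's proof: count the \(N(m+1)\) queries to \(\M_{\P}\) and the \(Nd\) solves of \eqref{e:sys_lin}, each of cost \(O(m^{3})\), and conclude termination in time polynomial in \(N\), \(d\), \(m\). If anything, your version is slightly cleaner—the paper writes the query count as \(p(m+1)\) where \(N(m+1)\) is meant, and cites Proposition \ref{prop:bounded_time} itself where Proposition \ref{prop:unique_soln} (nonsingularity of \eqref{e:sys_lin}) is the relevant reference, both of which you implicitly correct.
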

    \begin{proof}
        Algorithm \ref{algo:model_learn} performs \(p(m+1)\)-many queries with \(\M_{\P}\) and solves the system of linear equations \eqref{e:sys_lin} \(pd\)-many times. In view of Proposition \ref{prop:bounded_time}, we have that for a fixed \(p\) and \(i\), the complexity of solving \eqref{e:sys_lin} is at most \(O(m^{3})\). It follows that the execution of Algorithm \ref{algo:model_learn} terminates in bounded time.
    \end{proof}

    \begin{remark}
    \label{rem:sys-id_nonlinear}
    \rm{
        While a vast body of system identification literature deals with switched linear systems, nonlinear subsystems are also considered earlier in a few instances, see e.g., the recent work \cite{Bianchi2018} and the references therein. On the one hand, the available results do not restrict the subsystems dynamics to be governed by sets of scalar polynomials as considered in this paper and propose techniques for inferring them from sets of input-output data. On the other hand, we present an active learning technique for inferring a class of switched systems whose subsystems dynamics are governed by sets of scalar polynomials.
    }
    \end{remark}

    We now move on to the learning of (ii'').
%==========================================================================
\subsection{Learning minimal automaton that accepts \(L_{\A}\) from \(\M_{\A}\)}
\label{ss:automat_learning}
    For a sequence of indices \(\tilde{p}_{n}\in\P^*\), its prefixes are the sequences \(\lambda\), \(\tilde{p}_{1}\), \(\tilde{p}_{1}\tilde{p}_{2},\ldots,\tilde{p}_{1}\tilde{p}_{2}\cdots\tilde{p}_{n-1}\), and its suffixes are the sequences \(\tilde{p}_{2}\), \(\tilde{p}_{2}\tilde{p}_{3},\ldots,\tilde{p}_{2}\tilde{p}_{3}\cdots\tilde{p}_{n},\lambda\).
    \begin{defn}
    \label{d:prefix+suffix_closed}
    \rm{
        A set \(\P'\subseteq\P^*\) is \emph{prefix-closed} if every prefix of every element \(\tilde{p}_{n}\in\P'\) is also an element of \(\P'\). The set \(\P'\) is \emph{suffix-closed} if every suffix of every element \(\tilde{p}_{n}\in\P'\) is also an element of \(\P'\).
    }
    \end{defn}
    In the sequel we will skip explicit reference to the length of an element of \(\P^*\) by omitting the subscript \(n\), whenever the length is not relevant or is immediate from the context. For two sequences of indices \(p',p''\in\P^*\), we let \(p'\cdot p''\) denote their concatenation. Further, for two sets \(\P',\P''\subseteq\P^*\), we let \(\P'\cdot\P''=\{p'\cdot p''\:|\:p'\in\P',\:p''\in\P''\}\).
    \begin{defn}
    \label{d:obsv_tab}
    \rm{
        An \emph{observation table} \((Q,R,T)\) consists of three components: (a) a non-empty finite prefix-closed set \(Q\subseteq\P^*\), (b) a non-empty finite suffix-closed set \(R\subseteq\P^*\), and (c) a function \(T:\bigl((Q\cup Q\cdot\P)\cdot R\bigr)\to\{0,1\}\), defined as
        \begin{align*}
            T(\tilde{p}) =
            \begin{cases}
                0,\:\:&\:\:\text{if}\:\M_{\A}(\tilde{p})=0,\\
                1,\:\:&\:\:\text{if}\:\M_{\A}(\tilde{p})=1,
            \end{cases}
            \:\:\tilde{p}\in\bigl((Q\cup Q\cdot\P)\cdot R\bigr).
        \end{align*}
    }
    \end{defn}
    An observation table can be visualized as a two-dimensional array with rows labelled by the elements of \((Q\cup Q\cdot\P)\) and columns labelled by the elements of \(R\), with the entry for row \(q\) and column \(r\) equal to \(T(q\cdot r)\). In the sequel we will denote by \(\row(\tilde{p})\) the row of the observation table labelled by the element \(\tilde{p}\in Q\), and \(\row(\tilde{p})\neq 0\) to denote that not all elements in the row of \((Q,R,T)\) under consideration are \(0\).
    \begin{defn}
    \label{d:closed}
    \rm{
        An observation table \((Q,R,T)\) is \emph{closed} if for each \(p'\in Q\cdot\P\), there exists \(p''\in Q\) such that \(\row(p') = \row(p'')\).
    }
    \end{defn}
    \begin{defn}
    \label{d:consistent}
    \rm{
        An observation table \((Q,R,T)\) is \emph{consistent} if the following condition holds: whenever \(p',p''\in Q\) are such that \(\row(p')=\row(p'')\), for all \(p\in\P\), \(\row(p'\cdot p) = \row(p''\cdot p)\).
    }
    \end{defn}

    Let a closed and consistent observation table \((Q,R,T)\) be given. The Learner constructs a DFA \(\A_{C} = (V_{C},v'_{0},E_{C},\P,\ell_{C})\) from \((Q,R,T)\) by employing Algorithm \ref{algo:dfa_construc}.
    \begin{algorithm}[htbp]
			\caption{\texttt{dfa.construct(\((Q,R,T)\))}} \label{algo:dfa_construc}
		\begin{algorithmic}[1]
			\renewcommand{\algorithmicrequire}{\textbf{Input:}}
			\renewcommand{\algorithmicensure}{\textbf{Output:}}
			
			\REQUIRE An observation table \((Q,R,T)\).
            \ENSURE An automaton, \(\A_{C} = (V_{C},v'_{0},E_{C},\P,\ell_{C})\).
			
			\STATE Set \(m=\) the number of distinct \(\row(\tilde{p})\neq 0\), \(\tilde{p}\in Q\).
            \STATE Set \(v'_{i} = \row(\tilde{p})\), \(i=1,2,\ldots,m\).
            \STATE Set \(V_{C} = \{v'_{i},\:i=1,2,\ldots,m\}\) and \(v_{0}' = \row(\lambda)\).
            \STATE Set \(E_{C} = \{(v'_{i},v'_{j})\:\text{for each}\:p\in\P\:\text{satisfying}\:v'_{i} = \row(p'),\:v'_{j} = \row(p''),\:\row(p'\cdot p)=\row(p''),\:v'_{i},v'_{j}\in V_{C},\:i,j=1,2,\ldots,m\}\).
            \STATE Set \(\ell_{C}(v'_{i},v'_{j}) = p\in\P\) such that \(v'_{i}=\row(p')\), \(v'_{j} = \row(p'')\) and \(\row(p'\cdot p)=\row(p'')\), \(v'_{i},v'_{j}\in V_{C}\), \(i,j=1,2,\ldots,m\).
            \STATE Return \(\A_{C}\).
		\end{algorithmic}
	\end{algorithm}
    Notice that the construction of an automaton in Algorithm \ref{algo:dfa_construc} excludes the rows of \((Q,R,T)\) with all \(0\) entries. Consequently, a \(v\in V\) does not have an outgoing edge labelled with \(p\in\P\) such that \(v = \row(p')\neq 0\), \(\row(p'\cdot p) = 0\), \(p'\in Q\). By construction, \(\A_{C}\) is directed and labelled. In addition,
    \begin{proposition}
    \label{prop:connectivity}
        An automaton \(\A_{C} = (V_{C},v'_{0},E_{C},\P,\ell_{C})\) obtained from Algorithm \ref{algo:dfa_construc} is strongly connected.
    \end{proposition}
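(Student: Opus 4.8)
The plan is to establish strong connectivity by proving two directions separately: that every node of $\A_C$ is reachable from the initial node $v_0'$, and that $v_0'$ is reachable from every node. Composing a path $u\rightsquigarrow v_0'\rightsquigarrow w$ then yields a directed path between any ordered pair of nodes $(u,w)$. Before either direction I would record two structural facts. First, $v_0'=\row(\lambda)$ is genuinely a node: since $\lambda\in L_{\A}$ by Definition \ref{d:lang}, we have $T(\lambda)=1$ and hence $\row(\lambda)\neq 0$. Second, $L_{\A}$ is prefix-closed, because if $\tilde p\cdot r=\sigma|_0^{\tau}$ for some $\sigma\in S_{\A}$ then every prefix is again of the form $\sigma|_0^{\tau'}$ with $\tau'\leq\tau\leq M$. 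Consequently, whenever $\row(\tilde p)\neq 0$ there is some $r\in R$ with $\tilde p\cdot r\in L_{\A}$; since $\lambda\in R$ (as $R$ is non-empty and suffix-closed) and $L_{\A}$ is prefix-closed, this forces $\tilde p\in L_{\A}$ and, in turn, that every prefix of $\tilde p$ has a non-zero row.

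For the forward direction, take any node $\row(\tilde p)$ with $\tilde p=p_1p_2\cdots p_n\in Q$. Prefix-closedness of $Q$ places every prefix $p_1\cdots p_k$ in $Q$, and the fact just noted makes each such row non-zero, hence a node. For each $k$, choosing $p'=p_1\cdots p_{k-1}$, $p''=p_1\cdots p_k$ and symbol $p=p_k$, the defining condition $\row(p'\cdot p)=\row(p'')$ holds trivially, so Algorithm \ref{algo:dfa_construc} places an edge $\row(p_1\cdots p_{k-1})\to\row(p_1\cdots p_k)$ labelled $p_k$. Concatenating these edges gives a path from $v_0'=\row(\lambda)$ to $\row(\tilde p)$.

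For the backward direction I would exploit the strong connectivity of the underlying restriction automaton $\A$. Given a node $\row(\tilde p)$, the string $\tilde p\in L_{\A}$ traces a finite path in $\A$ ending at some node $u$; strong connectivity of $\A$ then supplies a path from $u$ back to the initial node $v_0$ of $\A$, with label $z$. I would next invoke the standard consequence of closedness and consistency that the transition $\row(\tilde p)\xrightarrow{\,p\,}\row(\tilde p\cdot p)$ is well defined in $\A_C$, so that reading $z$ from $\row(\tilde p)$ reaches $\row(\tilde p\cdot z)$, with all intermediate rows non-zero because $\tilde p\cdot z\in L_{\A}$ and $L_{\A}$ is prefix-closed. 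It then remains to identify $\row(\tilde p\cdot z)$ with $\row(\lambda)=v_0'$, i.e.\ to show that a label tracing a cycle $v_0\rightsquigarrow v_0$ in $\A$ is Myhill--Nerode equivalent to $\lambda$ with respect to $L_{\A}$ on the columns $R$.

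The hard part will be exactly this last identification, together with the bookkeeping on path lengths. The forward inclusion (extending an admissible prefix by first going around the cycle) is immediate once one checks that the relevant lengths stay within $M$, which is precisely what the largeness condition on $M$ in Definition \ref{d:lang} is meant to guarantee; the delicate point is the reverse inclusion, since a nondeterministic restriction automaton may admit a path labelled $\tilde p\cdot z\cdot r$ that does not decompose through $v_0$ after the prefix $\tilde p\cdot z$ is read. I would address this by restricting attention to the witnessing paths that realise membership in $L_{\A}$ and by using the choice of $M$ to ensure that the return segment $z$ and the test suffixes $r\in R$ are short enough that the cyclic structure of $\A$ forces the two rows to agree. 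Once $\row(\tilde p\cdot z)=v_0'$ is secured, the backward direction follows and strong connectivity is immediate.
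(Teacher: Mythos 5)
Your forward direction --- every nonzero row is reachable from \(v'_0=\row(\lambda)\) by walking along the prefixes of its label, each of which lies in \(Q\) by prefix-closedness of \(Q\) and has a nonzero row because \(L_{\A}\) is prefix-closed --- is, in substance, the entirety of the paper's proof: the paper shows that every node \(\row(p')\neq 0\) has an incoming edge from \(\row(p'')\), where \(p''\) is the immediate prefix of \(p'\), and declares that this suffices for strong connectivity. You are right that, read literally, it does not suffice: strong connectivity also requires a path from every node back to \(v'_0\), which is exactly your backward direction, and the paper is silent on it.

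The trouble is that your backward direction is not closed either. Everything hinges on the identification \(\row(\tilde p\cdot z)=\row(\lambda)\), which you defer and correctly flag as the hard part. As stated it does not follow from anything you have set up: rows are equality of the finite vectors \(\bigl(T(q\cdot r)\bigr)_{r\in R}\), and two strings whose witnessing paths in \(\A\) both end at \(v_0\) need not have equal rows. Indeed, \(\A\) is a labelled multigraph and may be nondeterministic, so ``the node reached by \(\tilde p\cdot z\)'' is not even well defined; and \(L_{\A}\) is truncated at length \(M\), so the residual of a long string \(\tilde p\cdot z\) can differ from the residual of \(\lambda\) on some suffix \(r\in R\) for purely length-related reasons. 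Your appeal to ``the largeness condition on \(M\)'' would have to be turned into a quantitative bound on \(\abs{\tilde p}+\abs{z}+\abs{r}\) relative to \(M\), and the reverse inclusion you yourself worry about (a witnessing path for \(\tilde p\cdot z\cdot r\) that does not pass through \(v_0\) after reading \(\tilde p\cdot z\)) is a genuine obstruction when \(\A\) is nondeterministic. So the portion of your argument that goes beyond the paper remains a plan rather than a proof, while the portion that is actually proved coincides with what the paper does.
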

    \begin{proof}
        It suffices to show that for each \(p'\in Q\) such that \(\row(p')\neq 0\), there exist \(p''\in Q\) and \(p\in\P\) such that the following conditions are true: \(\row(p'')\neq 0\) and \(\row(p''\cdot p)=\row(p')\).

        Recall that \((Q,R,T)\) employed to construct \(\A_{C}\), is closed and consistent. Let \(p'=p'_{1}p'_{2}\cdots p'_{n}\). Then \(\row(p')=\row(p'_{1}p'_{2}\cdots p'_{n})=\row(p'_{1}p'_{2}\cdots p'_{n-1}\cdot p'_{n})\). Since by definition of an observation table \(Q\) is a prefix-closed set, the sequence of subsystems indices \(p'_{1}p'_{2}\cdots p'_{n-1}\in Q\). Consequently, there exist \(p'' = p'_{1}p'_{2}\cdots p'_{n-1}\) and \(p=p'_{n}\) such that \(\row(p''\cdot p) = \row(p')\).

        It remains to show that \(\row(p'')\neq 0\). It is immediate that if the sequence \(p'_{1}p'_{2}\cdots p'_{n-1}\) is not admissible on \(\A\), then the sequence \(p'_{1}p'_{2}\cdots p'_{n-1}\cdot p'_{n}\) cannot be admissible on \(\A\). The assertion of Proposition \ref{prop:connectivity} follows at once.
    \end{proof}
    We define
    \begin{defn}
    \label{d:lang}
    \rm{
        The \emph{language of \(\A_{C}\)} is the set
        \begin{align*}
        %\label{e:lang2}
            L_{\A_{C}} = \{\tilde{p}_{n}:\tilde{p}_{n}\:\text{is admissible on}\:\A_{C},\:n=1,2,\ldots,M\}\cup\{\lambda\},
        \end{align*}
        where the admissibility of \(\tilde{p}_{n}\) on \(\A_{C}\) corresponds to the existence of a path \(\overline{v}'_{0},\overline{v}'_{1},\ldots,\overline{v}'_{n}\) on \(\A_{C}\) such that \(\overline{v}'_{0} = v'_{0}\), \((\overline{v}'_{k-1},\overline{v}'_{k})\in E\) and \(\ell(\overline{v}'_{k-1},\overline{v}'_{k})=\tilde{p}_{n}(k)\), \(k=1,2,\ldots,n\).
    }
    \end{defn}

    Algorithm \ref{algo:match} matches the languages \(L_{\A}\) and \(L_{\A_{C}}\), and outputs a counter-example, if exists.
     \begin{algorithm}[htbp]
			\caption{\texttt{language.match(\(\A_{C}\))}} \label{algo:match}
		\begin{algorithmic}[1]
			\renewcommand{\algorithmicrequire}{\textbf{Input:}}
			\renewcommand{\algorithmicensure}{\textbf{Output:}}
			
			\REQUIRE A conjectured automaton, \(\A_{C}\).
            		\ENSURE A sequence of subsystems indices, \(\tilde{p}\in\P^*\), if \(\tilde{p}\in L_{\A}\) but \(\tilde{p}\in L_{\A_{C}}\), or vice-versa.
		
			\FOR {\(n=,2,\ldots,M\)}
				\FOR {each \(\tilde{p}_{n}\in\P^*\)}
					\STATE Compute \(\M_{\A}(\tilde{p}_{n})\).
					\IF {\(\tilde{p}_{n}\in L_{\A}\) but \(\tilde{p}_{n}\notin L_{\A_{C}}\), or vice-versa}
						\STATE Return \(\tilde{p}_{n}\) and stop.
					\ENDIF
				\ENDFOR
			\ENDFOR
			\STATE Return \(\emptyset\).
		
		\end{algorithmic}
	\end{algorithm}

    \begin{algorithm}[htbp]
			\caption{Learning a minimal automaton whose language is \(L_{\A}\)} \label{algo:learning}
		\begin{algorithmic}[1]
			\renewcommand{\algorithmicrequire}{\textbf{Input:}}
			\renewcommand{\algorithmicensure}{\textbf{Output:}}
			
			\REQUIRE The number of subsystems, \(N\), the dimension of the subsystems, \(d\), the maximum length, \(M\), of the elements of \(L_{\A}\), and a gray-box simulation model, \(\M_{\A}\), of a switched system \eqref{e:swsys}.	
            \ENSURE The restriction automaton, \(\A = (V,v_{0},E,\P,\ell)\), for \eqref{e:swsys}.
			
			\STATE Construct \(\P^*\).
            \STATE Set \(Q = R = \{\lambda\}\).
            \STATE Compute \(\M_{\A}(\lambda)\) and \(\M_{\A}(p)\) for all \(p\in\P\).
            \STATE Construct an observation table \((Q,R,T)\).\label{step:repeat}
            \WHILE {\((Q,R,T)\) is not closed or not consistent}
                \IF {\((Q,R,T)\) is not closed}
                    \STATE Find \(\tilde{p}\in Q\) and \(p\in\P\) such that \(\row(\tilde{p}\cdot p)\neq \row(p')\) for all \(p'\in Q\).
                    \STATE Set \(Q = Q\cup\{\tilde{p}\cdot p\}\).
                    \STATE Extend \(T\) to \((Q\cup Q\cdot\P)\cdot R\) by computing \(\M_{\A}(\tilde{p})\) for all \(\tilde{p}\in(Q\cup Q\cdot\P)\cdot R\).
                \ENDIF
                \IF {\((Q,R,T)\) is not consistent}
                    \STATE Find \(p',p''\in Q\), \(p\in\P\) and \(\tilde{p}\in R\) such that \(\row(p')=\row(p'')\) and \(T(p'\cdot p\cdot\tilde{p})\neq T(p''\cdot p\cdot\tilde{p})\).
                    \STATE Set \(R = R\cup\{\tilde{p}\cdot p\}\).
                    \STATE Extend \(T\) to \((Q\cup Q\cdot\P)\cdot R\) by computing \(\M_{\A}(\tilde{p})\) for all \(\tilde{p}\in(Q\cup Q\cdot\P)\cdot R\)..
                \ENDIF
            \ENDWHILE
           % \STATE Employ Algorithm \ref{algo:dfa_construc} to construct an automaton \(\A_{C} = (V_{C},v'_{0},E_{C},\P,\ell_{C})\) from a closed and consistent observation table \((Q,R,T)\).
           \STATE Set \(\A_{C} =\)\texttt{dfa.construct(\((Q,R,T)\))}.
           \STATE Set \(\tilde{p}=\)\texttt{language.match(\(\A_{C}\))}.
           \IF {\(\tilde{p}\neq\emptyset\)}
                        \STATE Set \(Q = Q\cup\{\tilde{p}\}\cup\{\tilde{p}_{1},\tilde{p}_{1}\tilde{p}_{2},\tilde{p}_{1}
                        \tilde{p}_{2}\cdots\tilde{p}_{n-1}\}\).
                        \STATE Extend \(T\) to \((Q\cup Q\cdot\P)\cdot R\).
                        \STATE Go to Step \ref{step:repeat}.
           \ELSE
            		\STATE Output \(\A^* = \A_{C}\) and stop.
	   \ENDIF
		\end{algorithmic}
	\end{algorithm}

    The learning technique employed in Algorithm \ref{algo:learning} relies on the \(L^*\)-algorithm. In the \(L^*\)-algorithm, the Learner learns a DFA that accepts a certain language \(L\), with the aid of an oracle called the \emph{minimally adequate teacher} (MAT). A DFA \(\D\) under consideration in \cite{Angluin1987} is a tuple \((Q,q_{0},\Sigma,F,\delta)\), where \(Q\) is a finite set of nodes, \(q_{0}\in Q\) is the unique initial node, \(\Sigma\) is a finite set of alphabets, \(F\subseteq Q\) is a finite set of accepting (or final) nodes, and \(\delta:Q\times\Sigma\to Q\) is the node transition function. The language of \(\D\) is the set of all finite words (strings of alphabets) such that the DFA reaches a final node on reading them, i.e., a word \(w = w_{1}w_{2}\cdots w_{m}\), \(w_{k}\in\Sigma\), \(k=1,2,\ldots,m\), belongs to the language of \(\D\), if \(\delta(\cdots(\delta(\delta(q_{0},w_{1}),w_{2}),\cdots,w_{m})\in F\). The MAT knows \(L\) and answers two types of queries by the Learner: \emph{membership queries}, i.e., whether or not a given word belongs to \(L\), and \emph{equivalence queries}, i.e., whether a DFA conjectured by the Learner is correct or not. If the language of the conjectured DFA differs from \(L\), then the MAT responds to an equivalence query with a counter-example, which is a word that is misclassified by the conjectured DFA. The \(L^*\)-algorithm considers \(\D\) to be complete in the sense that there is a valid transition corresponding to every pair of node and alphabet. The class of restriction automata considered in this paper differs structurally from the DFA's considered in \cite{Angluin1987} in the following ways: (a) \(\A\) has \(0\)-many accepting nodes, and (b) \(\A\) is not necessarily complete in the sense that there may exist \(v\in V\) and \(p\in\P\) such that \(\ell(v,v')\neq p\) for any \(v'\in V\) with \((v,v')\in E\). In Algorithm \ref{algo:learning} we modify the \(L^*\)-algorithm to cater to learning of \(\A\). Loosely speaking, \(\M_{\A}\) plays the role of a MAT. The Learner initializes the sets \(Q\) and \(R\) to be \(\lambda\), and updates them by checking for the existence of the elements in \((Q\cup Q\cdot\P)\cdot R\) in \(L_{\A}\), until a closed and consistent observation table \((Q,R,T)\) is obtained. She conjectures an automaton, \(\A_{C}\), with the data in \((Q,R,T)\). Then the existence of every element \(\tilde{p}\in\P^*\) in \(L_{\A}\) and \(L_{\A_{C}}\) are matched. If a counter-example is obtained, then the Learner first updates \(Q\), and then updates \((Q,R,T)\) until the latter becomes closed and consistent. The algorithm terminates when a counter-example is not found. The latest version of \(\A_{C}\) is output as \(\A^*\). The procedures for constructing a closed and consistent observation table and incorporating a counter-example into the observation table employed in Algorithm \ref{algo:learning} are the same with the procedures employed in the \(L^*\)-algorithm for these tasks. However, our construction of an automaton from a closed and consistent observation table in Algorithm \ref{algo:dfa_construc} differs from the construction of a DFA employed in the \(L^*\)-algorithm. The \(0\)-entries in an observation table are utilized to designate the non-accepting nodes of a DFA in the \(L^*\)-algorithm, while we exclude the rows of \((Q,R,T)\) with all \(0\) entries to restrict \(\A_{C}\) to the admissible sequences of subsystems indices. Moreover, for learning a DFA, the \(L^*\)-algorithm does not require the maximum length of the elements of a regular language to be specified prior to its application. The MAT confirms correctness of the conjectured DFA by providing, if exists, a counter-example. In our setting, the Learner herself checks for correctness of the conjectured automaton by searching for a possible counter-example in Algorithm \ref{algo:match}, and the knowledge of \(M\) is utilized to this end.

    \begin{proposition}
    \label{prop:mainres}
        Algorithm \ref{algo:learning} terminates in bounded time.
    \end{proposition}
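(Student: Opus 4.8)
The plan is to bound separately (i) the time taken by each primitive operation, (ii) the number of iterations of the inner \textbf{while} loop that makes $(Q,R,T)$ closed and consistent, and (iii) the number of times a counter-example is incorporated, i.e., the number of returns to Step \ref{step:repeat}. Since $L_{\A}$ is a finite set of finite sequences (all of length at most $M$), it is a regular language, and therefore there exists a minimal automaton accepting it with a finite number of nodes; let $n^{*}$ denote this number. I will use $n^{*}$ as the master quantity against which the remaining counts are measured.

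First I would verify that every primitive step runs in bounded time. The set $\P^*$ is finite --- it contains $\sum_{n=0}^{M}N^{n}$ sequences --- so its construction, every membership query $\M_{\A}(\tilde{p})$ (which by assumption the accurate model answers in finite time), the construction of $(Q,R,T)$, the tests for closedness and consistency (Definitions \ref{d:closed} and \ref{d:consistent}), the DFA construction of Algorithm \ref{algo:dfa_construc}, and the language comparison of Algorithm \ref{algo:match} each range over finitely many elements of $\P^*$ and its finite subsets $Q$, $Q\cdot\P$, $R$, and hence terminate in bounded time.

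The core of the argument is the monotonicity of the number of distinct nonzero rows. Following the analysis of the $L^*$-algorithm \cite{Angluin1987}, I would show that this number never decreases during the run and is bounded above by $n^{*}$, because distinct rows correspond to pairwise distinguishable equivalence classes of the underlying language. Each fix of a closedness violation moves to $Q$ a sequence $\tilde{p}\cdot p$ whose row differs from every row currently indexed by $Q$, strictly increasing the count; each fix of a consistency violation adjoins to $R$ a suffix that separates two rows that were previously equal, again strictly increasing the count. Hence the inner \textbf{while} loop executes at most $n^{*}$ times before $(Q,R,T)$ is closed and consistent.

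Finally I would bound the number of counter-examples. The key step --- and the one I expect to be the main obstacle --- is to argue that incorporating a counter-example $\tilde{p}$ returned by Algorithm \ref{algo:match} forces the next closed and consistent table to have strictly more distinct rows than the previous conjecture had. This is the progress property of $L^*$: a sequence misclassified by $\A_{C}$ cannot remain consistent with the current rows once its prefixes are added to $Q$ and $T$ is extended, so re-closing and re-establishing consistency necessarily creates at least one new distinguished row. Care is needed here to adapt the classical argument to our modified construction, in which rows with all-$0$ entries are discarded and $\A_{C}$ has no accepting nodes (cf.\ Algorithm \ref{algo:dfa_construc} and Proposition \ref{prop:connectivity}); the finiteness of $M$, which makes $L_{\A}$ finite and $n^{*}$ finite, is precisely what guarantees that the comparison in Algorithm \ref{algo:match} is exhaustive and hence that a genuine counter-example is returned whenever $L_{\A_{C}}\neq L_{\A}$. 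Granting this progress property, the number of counter-examples is at most $n^{*}$, so Step \ref{step:repeat} is revisited at most $n^{*}$ times. Combining the three bounds with the per-step estimate, Algorithm \ref{algo:learning} performs only boundedly many bounded-time operations, and therefore terminates in bounded time.
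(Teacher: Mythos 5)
Your proof is correct in substance, but it takes a genuinely different route from the paper's: the paper disposes of the proposition in three lines by citing Angluin's Theorem 6 directly, asserting that the running time is polynomial in the number of nodes of \(\A\) (namely \(\abs{V}\)) and in the length of the longest counter-example (at most \(M\)), whence boundedness. You instead reconstruct the internals of that theorem --- finiteness of each primitive operation over the finite set \(\P^*\), the strict monotone growth of the number of distinct nonzero rows under closedness and consistency repairs, and the progress property forced by each counter-example --- measured against the number of nodes \(n^*\) of the minimal automaton for \(L_{\A}\) rather than \(\abs{V}\). The trade-off is instructive. The paper's citation is short but silently assumes that Angluin's analysis transfers unchanged to the modified setting of Algorithm \ref{algo:dfa_construc} (no accepting nodes, all-zero rows discarded, and counter-examples produced by the Learner's own exhaustive search in Algorithm \ref{algo:match} rather than by a MAT); your version makes that transfer explicit and honestly isolates the one step that actually needs re-verification, namely that incorporating a counter-example still strictly increases the number of distinct rows when rows are identified up to the discarding of all-zero rows. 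You leave that step as "granting this progress property," so strictly speaking your argument has the same unclosed seam as the paper's --- but you at least name it, and for the purposes of the proposition as stated (bounded time, not a polynomial bound) even a cruder observation suffices: \(Q\) and \(R\) grow strictly at every repair and every counter-example incorporation, and both are subsets of the finite set \(\P^*\), so the algorithm cannot loop forever. Either way the conclusion stands.
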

    \begin{proof}
     %\begin{enumerate}[label = \roman*., leftmargin = *]
     From \cite[Theorem 6]{Angluin1987} it follows that the running time of Algorithm \ref{algo:learning} is bounded by a polynomial in the number of nodes of \(\A\) and the length of the longest counter-example obtained. We have that the number of nodes of \(\A\) is \(\abs{V}\). Moreover, by construction of a counter-example, its maximum possible length is \(M\). Consequently, Algorithm \ref{algo:learning} terminates in bounded time.
   \end{proof}
   \begin{proposition}
    \label{prop:mainres123}
        \(\A^*\) is the minimal automaton whose language is \(L_{\A}\).
    \end{proposition}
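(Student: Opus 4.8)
# Proof Proposal for Proposition \ref{prop:mainres123}

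The plan is to reduce the claim to the correctness guarantee of the $L^*$-algorithm while accounting for the two structural modifications introduced in this paper, namely the absence of accepting nodes and the incompleteness of $\A_C$. First I would recall that Algorithm \ref{algo:learning} terminates only when \texttt{language.match}$(\A_C)$ returns $\emptyset$, which by the specification of Algorithm \ref{algo:match} means that for every $\tilde{p}_n\in\P^*$ with $n=1,2,\ldots,M$, we have $\tilde{p}_n\in L_{\A}$ if and only if $\tilde{p}_n\in L_{\A_C}$. Since both $L_{\A}$ and $L_{\A_{C}}$ consist of sequences of length at most $M$ together with $\lambda$, and $\lambda$ lies in both by definition, this exhaustive check over $\P^*$ establishes $L_{\A^*}=L_{\A_C}=L_{\A}$ outright. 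Thus the language-correctness half of the statement follows directly from the termination condition, without invoking any deep property of the $L^*$-construction.

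The substantive part is \emph{minimality}: I must argue that $\A^*$ has the fewest nodes among all automata of the paper's class whose language is $L_{\A}$. Here I would lean on the Myhill--Nerode-style reasoning underlying \cite{Angluin1987}. The key step is to show that the closed and consistent observation table $(Q,R,T)$ from which $\A_C$ is built via Algorithm \ref{algo:dfa_construc} induces exactly the equivalence classes of the right-congruence associated with $L_{\A}$, restricted to the admissible (nonzero) rows. Concretely, distinct values of $\row(\tilde{p})$ for $\tilde{p}\in Q$ with $\row(\tilde{p})\neq 0$ correspond to distinct equivalence classes distinguishable by some suffix in $R$; consistency guarantees that the transition function built in Algorithm \ref{algo:dfa_construc} is well defined on these classes; and closedness guarantees every class reachable by appending an index is already represented. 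Since \cite[Theorem 1]{Angluin1987} shows that the $L^*$-construction yields the minimal DFA for a regular language, and our construction differs only by deleting the single ``dead'' class (the all-zero row), I would argue that deleting this unreachable-from-acceptance sink cannot increase the node count below that of any correct automaton in our class.

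I anticipate the main obstacle is making rigorous the claim that removing the all-zero row preserves minimality \emph{within the restricted class of restriction automata}, rather than within the class of complete DFAs treated in \cite{Angluin1987}. The issue is that minimality is a comparative statement: I must verify that no restriction automaton with language $L_{\A}$ can have fewer nodes than $\A^*$. For this I would establish a lower bound argument of Myhill--Nerode type adapted to the incomplete setting: any automaton accepting $L_{\A}$ must have at least one node per equivalence class of admissible prefixes under the indistinguishability relation induced by $R$ (and, more strongly, by the full set of distinguishing suffixes), because two prefixes leading to the same node necessarily agree on all continuations, contradicting their being separated by some $r\in R$. Matching this lower bound against the node count $|V_C|$ of $\A^*$, which equals the number of distinct nonzero rows by construction in Algorithm \ref{algo:dfa_construc}, would close the argument.

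One caveat I would flag is that minimality in the incomplete setting can be genuinely delicate: there is a potential ambiguity about whether the all-zero (nonadmissible) row should ever be counted as a node in a competing automaton, since restriction automata by Definition \ref{d:automat} have no explicit rejecting sink. I would resolve this by invoking Proposition \ref{prop:connectivity}, which guarantees $\A_C$ is strongly connected and hence contains no spurious sink state, so that $|V_C|$ is already pared down to the admissible classes; the lower bound must then be phrased so that it, too, excludes any dead state, making the comparison apples-to-apples. Carrying out this bookkeeping carefully is where I expect the proof to require the most attention.
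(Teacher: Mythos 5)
Your proposal is correct in substance but takes a noticeably different (and more self-contained) route than the paper. The paper's entire proof is a one-line appeal to \cite[Theorem 6]{Angluin1987}: Algorithm \ref{algo:learning} outputs an automaton isomorphic to the minimal automaton accepting \(L_{\A}\), hence equivalent to it. You instead decompose the claim into two parts. For language correctness you argue directly from the termination condition of Algorithm \ref{algo:match} --- the exhaustive check over all of \(\P^*\) up to length \(M\) forces \(L_{\A^*}=L_{\A}\) --- which is actually better adapted to this paper than the citation, since here the equivalence query is implemented by brute-force enumeration rather than by a MAT. For minimality you reconstruct the Myhill--Nerode argument behind \cite[Theorem 1]{Angluin1987} and explicitly track the two modifications the paper makes to the \(L^*\) construction (no accepting nodes, deletion of the all-zero ``dead'' row, incompleteness of \(\A_{C}\)). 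This is the part the paper's citation silently glosses over: Angluin's theorem is stated for complete DFAs with accepting states, so transferring minimality to the restricted class of strongly connected restriction automata does require the apples-to-apples bookkeeping you flag (the lower bound must count only admissible equivalence classes, and Proposition \ref{prop:connectivity} guarantees \(\A_{C}\) carries no spurious sink). Your sketch leaves that lower-bound argument at the level of a plan rather than a completed proof, but the ingredients you identify are the right ones, and carrying them out would yield a more rigorous justification than the paper provides.
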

    \begin{proof}
     From \cite[Theorem 6]{Angluin1987} it follows that Algorithm \ref{algo:learning} outputs an automaton, \(\A^*\), isomorphic to the minimal automaton whose language is \(L_{\A}\). The assertion of Proposition \ref{prop:mainres123} follows at once.\footnote{Recall that two automata are \emph{equivalent} if they accept the same language, and two isomorphic automata are equivalent. The reader is referred to \cite[Chapter 4]{Hopcroft} for a detailed discussion on isomorphism and equivalence of automata.}
     %\end{enumerate}
    \end{proof}

    We next combine Algorithms \ref{algo:model_learn} and \ref{algo:learning} towards learning a switched system \eqref{e:swsys}.
%========================================================
\subsection{Learning a switched system \eqref{e:swsys} from \(\M\)}
\label{ss:swsys_learning}
    \begin{algorithm}[htbp]
			\caption{Learning a switched system \eqref{e:swsys}} \label{algo:learning_swsys}
		\begin{algorithmic}[1]
			\renewcommand{\algorithmicrequire}{\textbf{Input:}}
			\renewcommand{\algorithmicensure}{\textbf{Output:}}
			
			\REQUIRE The total number of subsystems, \(N\), the dimension of the subsystems, \(d\), the order of the component polynomials, \(m\), the maximum length, \(M\), of the elements of \(L_{\A}\), and a gray-box simulation model, \(\M\), of the switched system \eqref{e:swsys}.
            \ENSURE The functions \(f_{p}\), \(p\in\P\) and a minimal automaton whose language is \(L_{\A}\).

            \STATE Apply Algorithm \ref{algo:model_learn} to learn \(f_{p}\), \(p\in\P\).
            \STATE Apply Algorithm \ref{algo:learning} to learn a minimal automaton whose language is \(L_{\A}\).
        \end{algorithmic}
	\end{algorithm}

    Given the number of subsystems, the dimension of the subsystems, the order of the component polynomials, the maximum length of the elements in the language of the restriction automaton, and a simulation model from which certain information about state trajectories of the individual subsystems and existence of finite sequences of subsystems indices in the language of the restriction automaton can be obtained, Algorithm \ref{algo:learning_swsys} infers a switched system \eqref{e:swsys} by learning the functions \(f_{p}\), \(p\in\P\) and a minimal automaton whose language is \(L_{\A}\). Algorithms \ref{algo:model_learn} and \ref{algo:learning} are employed to learn \(f_{p}\), \(p\in\P\) from \(\M_{\P}\) and a minimal automaton that accepts the language \(L_{\A}\) from \(\M_{\A}\), respectively. From Propositions \ref{prop:bounded_time} and \ref{prop:mainres}, it follows that Algorithm \ref{algo:learning_swsys} terminates in bounded time. Moreover, the correctness of the outputs of Algorithm \ref{algo:learning_swsys} is asserted by Propositions \ref{prop:unique_soln} and \ref{prop:mainres123}.
%=========================================================
    \begin{example}
    \label{ex:numex}
    \rm{
    Consider a switched system \eqref{e:swsys} with \(\P = \{1,2,3\}\),
    \begin{align*}
        f_{1}(x) &= \pmat{-0.0625 + 0.125x_{1} -0.25x_{1}^{2} +0.5x_{1}^{3}\\
                          0.0625 - 0.125x_{2} + 0.25x_{2}^{2} + 0.5x_{2}^{3}\\
                          0.0625 - 0.125x_{3} + 0.25x_{3}^{2} - 0.5x_{3}^3},\\
        f_{2}(x) &= \pmat{0.3x_{1}^{3}-0.27x_{1}+0.81\\
                         -0.3x_{2}^{3}+0.27x_{2}+0.81\\
                         0.3x_{3}^{3}+0.81x_{3}},\\
        f_{3}(x) &= \pmat{-2x_{1}^{3}+4x_{1}^{2}\\
                          -4x_{2}^{3}+8x_{2}^{2}\\
                          0.5x_{3}^{3}+0.25x_{3}^{2}-0.625}.
    \end{align*}
     Let an admissible switching signal obey the following restrictions: a) the system starts operating at subsystem \(1\), b) the admissible switches between the subsystems are \(1\to 1\), \(1\to 2\), \(1\to 3\), \(2\to 1\), \(3\to 1\), and c) the admissible dwell times on subsystems \(2\) and \(3\) are one unit of time. The corresponding restriction automaton \(\A\) contains: \(V = \{v_{0},v_{1}\}\), \(E = \{(v_{0},v_{1}),(v_{1},v_{1}),(v_{1}^{(1)},v_{0}^{(1)}),(v_{1}^{(2)},v_{0}^{(2)})\}\), \(\ell(v_{0},v_{1}) = 1\), \(\ell(v_{1},v_{1}) = 1\), \(\ell(v_{1}^{(1)},v_{0}^{(1)}) = 2\), \(\ell(v_{1}^{(2)},v_{0}^{(2)}) = 3\), where \(v_{0}^{(i)} = v_{0}\), \(v_{1}^{(i)} = v_{1}\), \(i=1,2\). A pictorial representation of \(\A\) is shown in Figure \ref{fig:ex2_automat1}.
       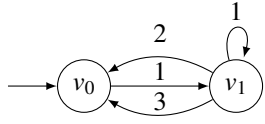
\begin{figure}[htbp]
    \centering
        \begin{tikzpicture}[every path/.style={>=latex},base node/.style={draw,circle}]
            \node[base node]            (a) at (-1,0)  { $v_{0}$ };
            \node[base node]            (b) at (1,0)  { $v_{1}$ };

             \draw[->] (b) edge[loop above] (b);
             \draw[->] (a) edge (b);
             \draw[->] (b) edge[bend right] (a);
             \draw[->] (b) edge[bend left] (a);
             \draw[->] (-2,0) -- (a);
             \node (s) at (0,0.2) {$1$};
             \node (s) at (1,1) {$1$};
             \node (s) at (0,0.7) {$2$};
             \node (s) at (0,-0.2) {$3$};
         \end{tikzpicture}
        \caption{Restriction automaton} \label{fig:ex2_automat1}
    \end{figure}

     Suppose that the Learner knows \(N = 3\), \(d = 3\), \(m=3\), \(M = 100\), and has access to a simulation model, \(\M\), of \eqref{e:swsys}. She applies Algorithm \ref{algo:learning_swsys} to infer \eqref{e:swsys}. The learning process involves two steps:\\
     {Step 1}: Learning the functions \(f_{p}\), \(p\in\P\) from \(\M_{\P}\) by applying Algorithm \ref{algo:model_learn}.\\
     I. For \(p=1\), the Learner sets \(\overline{x}_{0} = \pmat{0 & 0 & 0}^\top\), \(\overline{x}_{1} = \pmat{1 & 1 & 1}^\top\), \(\overline{x}_{2} = \pmat{2 & 2 & 2}^\top\), \(\overline{x}_{3} = \pmat{3 & 3 & 3}^\top\), inputs \((p,\overline{x}_{k})\), \(k=0,1,2,3\) to \(\M_{\P}\) and obtains \(x'_{0} = \pmat{-0.0625 & 0.0625 & 0.0625}^\top\), \(x'_{1} =\pmat{0.3125 & 0.6875 & -0.3125}^\top\), \(x'_{2} = \pmat{3.1875 & 4.8125 & -3.1875}^\top\),\\ \(x'_{3} = \pmat{11.5625 & 15.4375 & -11.5625}^\top\). She assigns \(a_{1,10} = -0.0625\), \(a_{1,20} = 0.0625\), \(a_{1,30} = 0.0625\), and then solves \eqref{e:sys_lin} with \(i=1,2,3\), \(m=3\). The following values are obtained: \(a_{1,11} = 0.125\), \(a_{1,12} = -0.25\), \(a_{1,13} = 0.5\), \(a_{1,21} = -0.125\), \(a_{1,22} = 0.25\), \(a_{1,23} = 0.5\), \(a_{1,31} = -0.125\), \(a_{1,32} = 0.25\), \(a_{1,33} = -0.5\). The function \(f_{1}\) is output as \(\pmat{-0.0625 + 0.125x_{1} -0.25x_{1}^{2} +0.5x_{1}^{3}\\
                          0.0625 - 0.125x_{2} + 0.25x_{2}^{2} + 0.5x_{2}^{3}\\
                          0.0625 - 0.125x_{3} + 0.25x_{3}^{2} - 0.5x_{3}^3}\).\\
     II. For \(p=2\), the Learner sets \(\overline{x}_{0}\), \(\overline{x}_{1}\), \(\overline{x}_{2}\), \(\overline{x}_{3}\) as described in I. above, inputs \((p,\overline{x}_{k})\), \(k=0,1,\ldots,m\) to \(\M_{\P}\), and obtains \(x'_{0} = \pmat{0.81 & 0.81 & 0}^\top\), \(x'_{1} = \pmat{0.84 & 0.78 & 1.11}^\top\), \(x'_{2} = \pmat{2.67 & -1.05 & 4.02}^\top\), \(x'_{3} = \pmat{8.1 & -6.48 & 10.53}^\top\). She assigns \(a_{2,10} = 0.81\), \(a_{2,20} = 0.81\), \(a_{2,30} = 0\), and then solves \eqref{e:sys_lin} with \(i=1,2,3\), \(m=3\). The following values are obtained: \(a_{2,11} = -0.27\), \(a_{2,12} = 0\), \(a_{2,13} = 0.3\), \(a_{2,21} = 0.27\), \(a_{2,22} = 0\), \(a_{2,23} = -0.3\), \(a_{2,31} = 0.81\), \(a_{2,32} = 0\), \(a_{2,33} = 0.3\). The function \(f_{2}\) is output as \(\pmat{0.3x_{1}^{3}-0.27x_{1}+0.81\\
                         -0.3x_{2}^{3}+0.27x_{2}+0.81\\
                         0.3x_{3}^{3}+0.81x_{3}}\).\\
    III. For \(p=3\), the Learner obtains \(x'_{0} = \pmat{0.81 & 0.81 & 0}^\top\), \(x'_{1} = \pmat{0.84 & 0.78 & 1.11}^\top\), \(x'_{2} = \pmat{2.67 & -1.05 & 4.02}^\top\), \(x'_{3} = \pmat{8.1 & -6.48 & 10.53}^\top\) by providing \((p,\overline{x}_{k})\), \(k=0,1,2,3\) to \(\M_{\P}\) as input, where \(\overline{x}_{k}\), \(k=0,1,2,3\) are as described in I. above. She assigns \(a_{3,10} = 0\), \(a_{3,20} = 0\), \(a_{3,30} = -0.625\), and then solves \eqref{e:sys_lin} with \(i=1,2,3\), \(m=3\). The following values are obtained: \(a_{3,11} = 0\), \(a_{3,12} = 4\), \(a_{3,13} = -2\), \(a_{3,21} = 0\), \(a_{3,22} = 8\), \(a_{3,23} = -4\), \(a_{3,31} = 0\), \(a_{3,32} = 0.25\), \(a_{3,33} = 5\). The function \(f_{3}\) is output as \(\pmat{-2x_{1}^{3}+4x_{1}^{2}\\
                          -4x_{2}^{3}+8x_{2}^{2}\\
                          0.5x_{3}^{3}+0.25x_{3}^{2}-0.625}\).\\
    {Step 2}: Learning a minimal automaton whose language is \(L_{\A}\) from \(\M_{\A}\) by applying Algorithm \ref{algo:learning}.\\
    I. \(\P^*\) is constructed.\\
    II. \(Q = R = \{\lambda\}\) are initialized and an observation table \(T_{1} = (Q,R,T)\) shown in Table \ref{tab:ex2_tab1} is constructed. \(T_{1}\) is not closed but consistent. The Learner updates \(Q = Q\cup\{2\}\), and extends \(T\) to \((Q\cup Q\cdot\P)\cdot R\) to construct the observation table \(T_{2} = (Q,R,T)\) shown in Table \ref{tab:ex2_tab2}. \(T_{2}\) is closed and consistent. The corresponding automaton, \(\A_{C}\), contains: \(V_{C} = \{v'_{0}\}\), \(E_{C} = \{(v'_{0},v'_{0})\}\), \(\ell_{C}(v'_{0},v'_{0}) = 1\). It is shown pictorially in Figure \ref{fig:ex2_automat2}. The sequence \(\tilde{p} = 12\) is a counter-example.\\
    III. The Learner updates \(Q = Q\cup\{1,12\}\), and extends \(T\) to \((Q\cup Q\cdot\P)\cdot R\) to construct the observation table \(T_{3} = (Q,R,T)\) shown in Table \ref{tab:ex2_tab3}. \(T_{3}\) is closed but not consistent. The Learner picks \(p' = 1\), \(p'' = 12\), \(p=2\), \(\tilde{p} = \lambda\), updates \(R = R\cup\{2\}\), and extends \(T\) to \((Q\cup Q\cdot\P)\cdot R\) to construct the observation table \(T_{4} = (Q,R,T)\) shown in Table \ref{tab:ex2_tab4}. \(T_{4}\) is closed and consistent. The corresponding automaton, \(\A_{C}\), contains: \(V_{C} = \{v'_{0},v'_{1}\}\), \(E_{C} = \{(v'_{0},v'_{1}),(v'_{1},v'_{1}),(v'^{(1)}_{1},v'^{(1)}_{0}),(v'^{(2)}_{1},v'^{(2)}_{0})\}\), \(\ell_{C}(v'_{0},v'_{1}) = 1\), \(\ell_{C}(v'_{1},v'_{1}) = 1\), \(\ell_{C}(v'^{(1)}_{1},v'^{(1)}_{0}) = 2\), \(\ell_{C}(v'^{(2)}_{1},v'^{(2)}_{0}) = 3\), where \(v'^{(i)}_{0} = v'_{0}\), \(v'^{(i)}_{1} = v'_{1}\), \(i=1,2\). Figure \ref{fig:ex2_automat2} shows \(\A_{C}\) pictorially. No \(\tilde{p}\in\P^*\) is a counter-example, and Algorithm \ref{algo:learning} outputs \(\A^* = \A_{C}\).
    \begin{table}[htbp]
	\centering
	\begin{tabular}{|c |c|}
        \hline
        \(T_{1}\) & \(\lambda\)\\
		\hline
        \(\lambda\) & \(1\)\\
        \hline
        \(1\) & \(1\)\\
        \(2\) & \(0\)\\
       % \vdots & \vdots\\
        \(3\) & \(0\)\\
        \hline
	\end{tabular}
	\caption{Observation table \(T_{1}\)}\label{tab:ex2_tab1}
	\end{table}
    \begin{table}[htbp]
	\centering
	\begin{tabular}{|c |c|}
        \hline
        \(T_{1}\) & \(\lambda\)\\
		\hline
        \(\lambda\) & \(1\)\\
        \(2\) & \(0\)\\
        \hline
        \(1\) & \(1\)\\
        \(3\) & \(0\)\\
        \(21\) & \(0\)\\
        \(22\) & \(0\)\\
        \(23\) & \(0\)\\
        \hline
	\end{tabular}
	\caption{Observation table \(T_{2}\)}\label{tab:ex2_tab2}
	\end{table}
    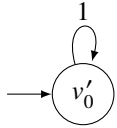
\begin{figure}[htbp]
    \centering
        \begin{tikzpicture}[every path/.style={>=latex},base node/.style={draw,circle}]
            \node[base node]            (a) at (0,0)  { $v'_{0}$ };

             \draw[->] (a) edge[loop above] (a);
            \node (s) at (0,1.1) {$1$};
            \draw[->] (-1,0) -- (a);
         \end{tikzpicture}
        \caption{Automaton corresponding to \(T_{2}\)} \label{fig:ex2_automat2}
    \end{figure}
     \begin{table}[htbp]
	\centering
	\begin{tabular}{|c |c|}
        \hline
        \(T_{3}\) & \(\lambda\)\\
		\hline
        \(\lambda\) & \(1\)\\
        \(1\) & \(1\)\\
        \(2\) & \(0\)\\
        \(12\) & \(1\)\\
        \hline
        \(3\) & \(0\)\\
        \(11\) & \(1\)\\
        \(13\) & \(1\)\\
        \(21\) & \(0\)\\
        \(22\) & \(0\)\\
        \(23\) & \(0\)\\
        \(121\) & \(1\)\\
        \(122\) & \(0\)\\
        \(123\) & \(0\)\\
        \hline
	\end{tabular}
	\caption{Observation table \(T_{3}\)}\label{tab:ex2_tab3}
	\end{table}
     \begin{table}[htbp]
	\centering
	\begin{tabular}{|c |c| c|}
        \hline
        \(T_{3}\) & \(\lambda\) & \(2\)\\
		\hline
        \(\lambda\) & \(1\) & \(0\)\\
        \(1\) & \(1\) & \(1\)\\
        \(2\) & \(0\) & \(0\)\\
        \(12\) & \(1\) & \(0\)\\
        \hline
        \(3\) & \(0\) & \(0\)\\
        \(11\) & \(1\)& \(1\)\\
        \(13\) & \(1\)& \(0\)\\
        \(21\) & \(0\)& \(0\)\\
        \(22\) & \(0\)& \(0\)\\
        \(23\) & \(0\)& \(0\)\\
        \(121\) & \(1\)& \(1\)\\
        \(122\) & \(0\)& \(0\)\\
        \(123\) & \(0\)& \(0\)\\
        \hline
	\end{tabular}
	\caption{Observation table \(T_{4}\)}\label{tab:ex2_tab4}
	\end{table}
    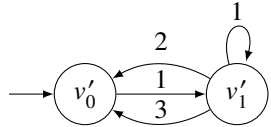
\begin{figure}[htbp]
    \centering
        \begin{tikzpicture}[every path/.style={>=latex},base node/.style={draw,circle}]
            \node[base node]            (a) at (-1,0)  { $v'_{0}$ };
            \node[base node]            (b) at (1,0)  { $v'_{1}$ };

             \draw[->] (b) edge[loop above] (b);
             \draw[->] (a) edge (b);
             \draw[->] (b) edge[bend right] (a);
             \draw[->] (b) edge[bend left] (a);
             \draw[->] (-2,0) -- (a);
             \node (s) at (0,0.2) {$1$};
             \node (s) at (1,1.1) {$1$};
             \node (s) at (0,0.7) {$2$};
             \node (s) at (0,-0.2) {$3$};
         \end{tikzpicture}
        \caption{Automaton learnt from Algorithm \ref{algo:learning}} \label{fig:ex2_automat3}
    \end{figure}
    }
    \end{example}

%=========================================================
\section{Conclusion}
\label{s:concln}
    In this paper we presented an algorithm to learn discrete-time switched systems whose subsystems dynamics are governed by sets of scalar polynomials and admissible switching signals are governed by restriction automata. Given the number of subsystems, the dimension of the subsystems, the order of the component polynomials, the maximum length of the elements in the language of the restriction automaton, and a gray-box simulation model of the switched system, our algorithm learns the coefficients of the component polynomials and a minimal automaton that accepts the language of the restriction automata correctly, in bounded time. Our learning technique relies on linear algebraic tools and the \(L^*\)-algorithm from machine learning literature. A next natural research direction is the design of active learning techniques for the identification of larger classes of switched systems, e.g., whose subsystems dynamics are nonlinear but not restricted to the functions of scalar polynomial considered in this paper, the restrictions on the sets of admissible switching signals are non-deterministic, etc. This matter is currently under investigation and the findings will be reported elsewhere.

%%%%

%========================================================================================
%========================================================================================

%\bibliographystyle{siam}
%\bibliography{refr_ak}

%\end{multicols}

%\bigskip

\end{document}